\documentclass[11pt]{article}

\usepackage{amssymb,amsmath,amsthm,amsfonts}
\usepackage[numbers]{natbib}
\usepackage{microtype,color}
\usepackage{enumitem}

\newcommand{\Omit}[1]{}

\newtheorem{define}{Definition}

\newtheorem{theorem}{Theorem}
\newtheorem{lemma}{Lemma}

\newcommand{\mysum}[2]{\displaystyle\sum\limits_{#1}^{#2}}

\newcommand{\Outcomes}{\mathcal{O}}
\newcommand{\outcome}{o}
\newcommand{\States}{\Omega}
\newcommand{\state}{\omega}
\newcommand{\prm}{^{\prime}}
\newcommand{\vh}{\hat{v}}
\newcommand{\ph}{\hat{p}}
\newcommand{\vv}{\vec{v}}
\newcommand{\pv}{\vec{p}}

\newcommand{\subStates}[2]{\States_{#1,#2}}
\newcommand{\ioStates}{\subStates{i}{\outcome}}
\newcommand{\ioprmStates}{\subStates{i\prm}{\outcome}}

\newcommand{\Sio}{S_{o,i,\pv_{-i}(o)}}
\DeclareMathOperator*{\cart}{\times}

\begin{document}
\title{Designing Markets for Daily Deals}
\author{
Yang Cai\thanks{MIT, ycai@csail.mit.edu.},
Mohammad Mahdian\thanks{Google, mahdian@google.com.},
Aranyak Mehta\thanks{Google, aranyak@google.com.}, and
Bo Waggoner\thanks{Harvard, bwaggoner@fas.harvard.edu.}
}

\maketitle

\begin{abstract}
Daily deals platforms such as Amazon Local, Google Offers, GroupOn, and
LivingSocial have provided a new channel for merchants to directly market
to consumers. In order to maximize consumer acquisition and retention,
these platforms would like to offer deals that give good value to users.
Currently, selecting such deals is done manually; however, the large number
of submarkets and localities necessitates an automatic approach to selecting
good deals and determining merchant payments.

We approach this challenge as a market design problem. We postulate that
merchants already have a good idea of the attractiveness of their deal to
consumers as well as the amount they are willing to pay to offer their deal.
The goal is to design an auction that maximizes a combination of the revenue of the auctioneer
(platform), welfare of the bidders (merchants), and the positive externality on a third
party (the consumer), despite the asymmetry of information
about this consumer benefit.
We design auctions that truthfully elicit this information from the merchants
and maximize the social welfare objective, and we characterize the consumer
welfare functions for which this objective is truthfully implementable.
We generalize this characterization to a very broad mechanism-design setting and give examples
of other applications.
\end{abstract}


\setlength{\abovedisplayskip}{7pt}
\setlength{\belowdisplayskip}{7pt}

\section{Introduction}
\label{sec:intro}

Daily deals websites such as Amazon Local, Google Offers, GroupOn, and
LivingSocial have provided a new channel of direct marketing for
merchants. In contrast to standard models of advertising such as
television ads and web search results, the daily deals setting provides
two new challenges to platforms.

First, in models of advertising such as web search, the advertisement
is shown on the side of the main content; in contrast, daily deals websites
offer consumers web pages or emails that contain only advertisements
(\emph{i.e.}, coupons). Therefore, for the long-term success of a platform,
the decision of which coupons to show
to the user must depend heavily on the benefit these coupons provide to consumers.

Second, the merchant often has significantly more information
than the advertising platform about this consumer benefit. This benefit
depends on many things: how much discount the
coupon is offering, how the undiscounted price compares with the price
of similar goods at the competitors, the price elasticity of demand for
the good, the fine prints of the coupon, and so on.
These parameters are known to the merchants, who
routinely use such information to optimize their pricing and their
inventory, but not to the platform provider who cannot be expected to
be familiar with all markets and would need to invest significant
resources to learn these parameters. Furthermore, unlike standard
advertising models where an ad is displayed over a time period to a
number of users and its value to the user (often measured using
proxies like click-through rate or conversion rate) can be estimated
over time, the structure of the daily deals market does not permit
much experimentation: A number of deals must be selected at the
beginning of each day to be sent to the subscribers all at
once, and the performance of previous coupons, if any, by the same
advertiser is not a good predictor of the performance of the current
coupon, as changing any of the terms of the coupons can significantly
affect its value.

These challenges pose a novel {\em market design} problem: How can we select deals
with good benefit to the consumer in the presence of strongly asymmetric
information about this benefit? This is precisely our goal in this paper. We postulate that merchants
hold, as private information, two parameters: A \emph{valuation}
equalling the overall utility the merchant gains from being selected
(as in a standard auction); and a \emph{quality} that represents
the attractiveness of their deal to a user.
The task is to design an auction mechanism that incentivizes the merchants
to reveal their private information about both their valuation and
quality, then picks deals that maximize a combination of
platform, merchant, and consumer values.
We show that, if consumer welfare is a convex function of
quality, then we can design a truthful auction that
maximizes total social welfare; furthermore, we show that the
convexity condition is necessary.
We give negative results for another natural goal, achieving a
constant-fraction welfare objective subject to a quality threshold guarantee.
The main idea behind our positive results is to design a mechanism where
bidders' total payment is contingent (in a carefully chosen way) upon whether
the consumer purchases the coupon. Not surprisingly, the theory of proper scoring
rules comes in handy here.

We then extend these results to characterize incentive-compatible mechanisms
for social welfare maximization in a very general auction setting, where
the type of each bidder has both a valuation and a quality component.
Quality is modeled as a distribution over possible states of the world;
a \emph{consumer welfare function} maps these distributions to the welfare of
some non-bidding party.
We design truthful welfare-maximizing mechanisms for this setting and
characterize implementable consumer welfare functions with a convexity
condition that captures expected welfare and, intuitively, risk-averse preferences.
We give a number of example applications demonstrating that our
framework can be applied in a broad range of mechanism design
settings, from network design to principal agent problems.

The rest of this paper is organized as follows: In the next section,
we formally define the setting and the problem. In
Section~\ref{sec:mechanism}, we give a mechanism for maximizing
social welfare when consumer welfare is a convex function the quality. In
Section~\ref{sec:impossibility}, we show that no truthful
mechanism even approximates the objective of maximizing the
winner's value subject to a minimum quality; we also show
that the convexity assumption in
Section~\ref{sec:mechanism} is necessary. Finally, in
Section~\ref{sec:general}, we extend our
mechanisms and characterization to a much more general setting.

\paragraph{Related work.}
To the best of our knowledge, our work is the first to address
mechanism design in a market for daily deals. There has been unrelated
work on other aspects of daily deals (\emph{e.g.} impact on
reputation)~\cite{mitzenmacher1,mitzenmacher2,lu}. A related,
but different line of work deals with mechanism design for pay-per-click
(PPC) advertising. In that setting, as in ours, each ad has a value and
a quality (representing click-through rate for PPC ads and the
probability of purchasing the deal in our setting). The objective is
often to maximize the combined utility of the advertisers and the
auctioneer~\cite{hal,eos}, but variants where the utility of the user
is also taken into account have also been
studied~\cite{userexperience}.  The crucial difference is that in
PPC advertising, the auctioneer holds the quality parameter,
whereas in our setting, this parameter is only known to the merchant
and truthful extraction of the parameter is an important part of the
problem. Other work on auctions with a quality
component~\cite{che,espinola} assume that a quality level may be
assigned by the mechanism to the bidder (who always complies), in
contrast to our setting where quality is fixed and private
information.

We make use of proper scoring rules, an overview of which appears
in~\cite{gneiting2007}; to our knowledge, proper scoring rules have
been used in auctions only to incentivize agents to guess others'
valuations~\cite{azar2012}.
Our general setting is related to an extension of proper scoring rules,
decision rules and decision markets~\cite{hanson1999,othman2010}. There,
a mechanism designer elicits agents' predictions of an event conditional
on which choice she makes. She then selects an outcome,
observes the event, and pays the agents according to the
accuracy of their predictions. Unlike our setting, agents are assumed
not to have preferences over the designer's choice, except
in~\cite{boutilier2012eliciting}, which (unlike us) assumes
that the mechanism has partial knowledge of these preferences and does
not attempt to elicit preferences.
Our general model may be interpreted as a fully general extension to the
decision-rule setting in which we introduce the novel challenge of
\emph{truthfully eliciting} these preferences and incorporate them into
the objective. However, we focus on deterministic
mechanisms, while randomized mechanisms have been shown to have nice
properties in a decision-rule setting~\cite{chen2011}.

Another related line of work examines examines when a proper
scoring rule might incentivize an agent to take undesirable actions in
order to improve his prediction's accuracy. When the mechanism
designer has preferences over different states, scoring rules that
incentivize beneficial actions are termed \emph{principal-aligned}
scoring rules~\cite{vince}. A major difference is that the mechanism
designer in the principal-aligned setting, unlike in ours, does not
select between outcomes of any mechanism, but merely observes a state
of the world and makes payments. 

\section{The Model}
\label{sec:model}
In this section, we formulate the problem in its simplest form: when
an auctioneer has to select just one of the interested merchants to
display her coupon to a single consumer.\footnote{Our mechanisms for this
  model can be immediately extended to the case of many consumers by scaling.}
In Section~\ref{sec:general}, our model and results will be generalized to a much broader setting.

There are $m$ bidders, each with a single coupon. We also refer
to the bidders as {\em merchants} and to coupons as {\em deals}.  An
auctioneer selects at most one of these coupons to display.
For each bidder $i$, there is a probability $p_i \in [0,1]$ that
if $i$'s coupon is displayed to a consumer, it will be purchased by
the consumer. We refer to $p_i$ as
the {\em quality} of coupon $i$. Furthermore, for each bidder $i$,
there is a value $v_i \in \mathbb{R}$ that represents the expected
value that $i$ gets if her coupon is chosen to be displayed to the
advertiser. Both $v_i$ and $p_i$ are private information
of the bidder $i$, and are unknown to the auctioneer.\footnote{In
  Section \ref{subsection:examples-general}, we will
  briefly discuss extensions in which both parties have quality information.}
We refer to
$(v_i,p_i)$ as bidder $i$'s {\em type}. We assume that the bidders are
expected utility maximizers and their utility is quasilinear in payment.

Note that $v_i$ is $i$'s total expected valuation for being selected;
in particular, it is \emph{not} a value-per-purchase (as in \emph{e.g.} search
advertisement). Rather, $v_i$ is the maximum amount $i$
would be willing to pay to be selected (before observing the
consumer's purchasing decision). Also, we allow $v_i$ and $p_i$
to be related in an arbitrary manner.
If, for instance, $i$ derives value
$a_i$ from displaying the coupon plus an additional $c_i$
if the consumer purchases the coupon, then $i$ would compute
$v_i = a_i + p_i c_i$ and submit her true type $(v_i,p_i)$. For our results, we do
not need to assume any particular model of how $v_i$ is computed or
of how it relates to $p_i$.
 
An auction mechanism functions as follows. It asks each bidder $i$ to
reveal her private type $(v_i, p_i)$. Let $(\vh_i, \ph_i)$ denote the
type reported by bidder $i$. Based on these reports, the mechanism
chooses one bidder $i^*$ as the {\em winner} of the auction, i.e., the
merchant whose deal is shown.
Then, a consumer arrives; with probability $p_{i^*}$, she decides to
purchase the deal. Let $\state\in\{0,1\}$ denote the consumer's decision
(where $1$ is a purchase). The mechanism observes the consumer's decision
and then charges the bidders according to a payment rule, which may
depend on $\omega$.

We require the mechanism to be {\em truthful}, which means that it is,
first, {\em incentive compatible}:
for every merchant $i$ and every set of types reported by the other merchants,
$i$'s expected utility is maximized if she reports her true type
$(v_i,p_i)$; and second, {\em interim individually rational}:
each merchant receives a non-negative utility in expectation (over the
randomization involved in the consumer's purchasing decision) if she reports her true type.

The goal of the auctioneer is to increase some
combination of the welfare of all the parties involved. If we ignore the
consumer, this can be modeled by the sum of the utilities of the
merchants and the auctioneer, which, by quasi-linearity of the
utilities, is precisely $v_{i^*}$. To capture the welfare of the user,
we suppose that a reasonable proxy is the quality $p_{i^*}$ of the selected
deal.  We study two natural ways to combine the merchant/auctioneer welfare
$v_{i^*}$ with the consumer welfare $p_{i^*}$. One is to maximize
$v_{i^*}$ subject to the deal quality $p_{i^*}$ meeting a minimum
threshold $\alpha$. Another is to model the consumer's welfare as a function
$g(p_{i^*})$ of quality and seek to maximize total welfare $v_{i^*} + g(p_{i^*})$. In the latter
case, when $g$ is a convex function, we construct in the next section a
truthful mechanism that maximizes this social welfare function (and we show
in Section~\ref{sec:impossibility} that, when $g$ is not convex,
there is no such mechanism). For the former case, in
Section~\ref{sec:impossibility}, we prove that it is not possible to
achieve the objective, even approximately.

\section{A Truthful Mechanism via Proper Scoring Rules}
\label{sec:mechanism}

In this section, we show that for every {\em convex} function $g$,
there is an incentive-compatible mechanism that maximizes the social
welfare function $v_{i^*} + g(p_{i^*})$. A convex consumer welfare $g$
function may be natural in many settings. Most importantly, it includes the natural special case of a
linear function; and it also intuitively models \emph{risk aversion}, because (by definition
of convexity) the average welfare of taking a guaranteed outcome, which is
$p g(1) + (1-p) g(0)$, is larger than the welfare $g(p)$ of
facing a lottery over those outcomes.\footnote{\label{ft:risk-example}
  To see this, suppose $100$ consumers arrive, and the welfare of each is
  the convex function $g(p) = p^2$. If $50$ consumers see a deal with $p = 0$ and
  $50$ see a deal with $p = 1$, the total welfare is $50(0) + 50(1) = 50$.
  If all $100$ see a deal with $p = 0.5$, the total welfare is
  $100\left(0.5^2\right) = 25$. Under this welfare function, the
  ``sure bet'' of $50$ purchases is
  preferable to the lottery of 100 coin flips.}\footnote{\label{ft:risk-concave}
  Note that risk aversion is often associated
  with \emph{concave} functions. These are unrelated as they do \emph{not}
  map probability distributions to welfare; they are functions
  $u: \mathbb{R} \to \mathbb{R}$ that map
  wealth to welfare. Concavity represents risk aversion in that setting because
  the welfare of a guaranteed payoff $x$, which is $u(x)$, is larger than the
  welfare of facing a draw from a distribution with probability $x$, which is
  $x u(1) + (1-x)u(0)$.}

We will make use of \emph{binary scoring rules}, which are defined as follows.

\begin{define}
A binary scoring rule $S:[0,1]\times\{0,1\}\mapsto\mathbb{R}$ is a function
that assigns a real number $S(\ph,\state)$ to each probability report $\ph
\in [0,1]$ and state $\state \in \{0,1\}$. The expected value of
$S(\ph,\state)$, when $\state$ is drawn from a Bernoulli distribution with
probability $p$, is denoted by $S(\ph ; p)$. A scoring rule $S$ is
\emph{(strictly) proper} if, for every $p$, $S(\ph ; p)$ is (uniquely)
maximized at $\ph = p$.
\end{define}

Traditionally, proper binary scoring rules are used to truthfully extract the
probability of an observable binary event from an agent who knows this
probability: It is enough to pay the agent $S(\ph, \state)$ when the agent
reports the probability $\ph$ and the state turns out to be $\state$. In our
setting, obtaining truthful reports is not so straightforward: A bidder's
report affects whether or not they win the auction as well as any scoring rule
payment. However, the following theorem shows that, when the consumer welfare
function $g$ is convex, then a careful use of proper binary
scoring rules yields an incentive-compatible auction mechanism.

\begin{theorem} 
\label{thm:mech}
Let $g: \mathbb{R} \to \mathbb{R}$ be a convex function. Then there is
a truthful auction that picks the bidder $i^*$ that maximizes $v_{i^*}
+ g(p_{i^*})$ as the winner.
\end{theorem}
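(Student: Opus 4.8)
The plan is to combine a VCG-style threshold allocation and payment with a proper scoring rule correction that forces each bidder to internalize the consumer welfare of her own deal. The first step is to extract a proper scoring rule from the convex function $g$. Since $g$ is convex, at every $\ph \in [0,1]$ it admits a subgradient $g'(\ph)$ satisfying $g(p) \ge g(\ph) + g'(\ph)(p - \ph)$ for all $p$. I would define $S(\ph, \omega) = g(\ph) + g'(\ph)(\omega - \ph)$. Its expected value $S(\ph; p) = g(\ph) + g'(\ph)(p - \ph)$ is then, by the subgradient inequality, at most $g(p)$, with equality exactly at $\ph = p$; hence $S$ is a proper scoring rule with $S(p; p) = g(p)$. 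This is the standard correspondence between convex functions and proper scoring rules, and it is precisely where convexity of $g$ enters.

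Next I would specify the mechanism. On reports $(\vh_i, \ph_i)$, declare the winner to be $i^* = \arg\max_i (\vh_i + g(\ph_i))$, and write $T_{-i} = \max_{j \ne i}(\vh_j + g(\ph_j))$ for the best competing bid (augmented by a reserve of $0$ for the no-deal option). Charge the winner $\pi_{i^*}(\omega) = T_{-i^*} - S(\ph_{i^*}, \omega)$ and charge every loser $0$. Intuitively, $\vh_i + g(\ph_i)$ plays the role of a single scalar bid, the threshold $T_{-i}$ is the usual VCG externality term, and the term $-S(\ph_{i^*}, \omega)$ credits the winner, in expectation, with exactly the realized consumer welfare $g(p_{i^*})$.

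To establish incentive compatibility I would fix the other bidders' reports, so that $T_{-i}$ is a constant. If bidder $i$ wins, her expected utility is $v_i - T_{-i} + S(\ph_i; p_i)$, which by properness of $S$ is maximized at $\ph_i = p_i$ and there equals $v_i + g(p_i) - T_{-i}$; if she loses it is $0$. Consequently, under truthful reporting she wins exactly when $v_i + g(p_i) > T_{-i}$, which is precisely the regime in which winning beats losing, and in either regime no joint $(\vh_i, \ph_i)$ deviation can exceed the truthful utility. Thus truthful reporting is a (weakly) dominant strategy in expectation over $\omega$, and the resulting outcome selects $\arg\max_i(v_i + g(p_i))$, as claimed. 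Interim individual rationality is then immediate, since a truthful bidder earns $\max(0,\, v_i + g(p_i) - T_{-i}) \ge 0$.

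The main obstacle, and the reason a naive scoring rule alone does not suffice, is the coupling between the allocation and the scoring payment: a report of $\ph_i$ simultaneously shifts the bid $\vh_i + g(\ph_i)$ (hence whether $i$ wins) and the scoring payment. The crux of the argument is that properness pins the best attainable win-utility, taken over all joint deviations that cause $i$ to win, to exactly $v_i + g(p_i) - T_{-i}$, so that the bidder's win/lose incentive is perfectly aligned with the social objective $v_i + g(p_i)$. Carefully verifying this in both the ``should-win'' and ``should-lose'' regimes is the heart of the proof, and convexity of $g$ is exactly what guarantees the requisite proper scoring rule exists.
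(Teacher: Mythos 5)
Your proposal is correct and follows essentially the same route as the paper: the same subgradient construction $S(\ph,\omega)=g(\ph)+g'(\ph)(\omega-\ph)$ of a proper scoring rule with $S(p;p)=g(p)$ (the paper isolates this as a lemma), the same mechanism charging the winner the second-highest adjusted bid minus the scoring payment, and the same case analysis (should-win versus should-lose) using properness to cap the deviation utility at $v_i+g(p_i)-T_{-i}$. The only cosmetic differences are that you write the scoring rule in a single formula rather than splitting the $\omega=0,1$ cases, and you mention a zero reserve, which the paper leaves implicit.
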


The proof of this theorem relies on the following lemma about proper
binary scoring rules, which is well known and given, for example, in
\cite{gneiting2007}. For the sake of completeness, we include a proof
here.
\begin{lemma} 
\label{lemma:S-proper} 
Let $g: [0,1] \to \mathbb{R}$ be a (strictly) convex
function. Then there is a (strictly) proper binary scoring rule $S_g$ such
that for every $p$, $S_g(p; p) = g(p)$. 
\end{lemma}
\begin{proof}
Let $g\prm(p)$ be a subgradient of $g$ at point $p$, i.e., a value
such that for all $q \in [0,1]$, $g(q) \geq g(p) +
g\prm(p)(q-p)$.\footnote{If $g$ is differentiable, $g\prm$ must be the 
  derivative of $g$. Even if $g$ is not differentiable, convexity of
  $g$ implies that a subgradient $g\prm$ always exists.} Now define:
\begin{align*}
  S_g(p, 1) &= g(p) + (1-p) g\prm(p)  \\
  S_g(p, 0) &= g(p) - p g\prm(p)  ~.
\end{align*}
We have that $S_g(p ; p) = pS_g(p, 1) + (1-p)S_g(p, 0) = g(p)$.
We now prove that $S_g$ is a (strictly) proper binary scoring rule. We have

\begin{eqnarray*}
S_g(\ph; p) &=& p \left(g(\ph) + (1-\ph) g\prm(\ph)\right) +
(1-p)\left(g(\ph) - \ph g\prm(\ph)\right)\\
&=&g(\ph) + (p-\ph) g\prm(\ph) . \\
\end{eqnarray*}
By definition of $g\prm$, the above value is never greater than
$g(p)$, and is equal to $g(p)$ at $\ph=p$. Therefore, the maximum of
$S_g(\ph, p)$ is achieved at $\ph = p$. If $g$ is strictly convex,
the inequality is strict whenever $\ph \neq p$.
\end{proof}

\begin{proof}[Proof of Theorem~\ref{thm:mech}]
Let $h$ be the following ``adjusted value'' function: $h(\vh,\ph) =
\vh + g(\ph)$. For convenience, rename the bidders so that bidder $1$
has the highest adjusted value, bidder $2$ the next highest, and so
on. The mechanism deterministically gives the slot to bidder $1 =
i^*$.
All bidders except bidder $1$ pay
zero. Bidder $1$ pays $h(\vh_2,\ph_2) - S_g(\ph_1,\state)$, where $S_g$
is a proper binary scoring rule satisfying $S_g(p ; p) = g(p)$ and $\state$ is
$1$ if the customer purchases the coupon and $0$ otherwise. The
existence of this binary scoring rule is guaranteed by Lemma~\ref{lemma:S-proper}.

We now show that the auction is truthful.
If $i$ bids truthfully and does not win, $i$'s utility is zero.
If $i$ bids truthfully and wins, $i$'s expected utility is
\begin{align*}
    &~v_i - h(\vh_2,\ph_2) + S_g(p_i ; p_i)  \\
  = &~h(v_i,p_i) - h(\vh_2,\ph_2) .
\end{align*}
This expected utility is always at least $0$ because $i$ is selected as
winner only if $h(v_i,p_i) \geq h(v_2,p_2)$.
This shows that the auction is interim individually rational.

Now suppose that $i$ reports $(\vh_i,\ph_i)$. If $i$ does not win the auction
with this report, then $i$'s utility is zero, but a truthful report always gives
at least zero. So we need only consider the case where $i$ wins the auction
with this report. Then, $i$'s expected utility is
\begin{align*}
      &~v_i - h(\vh_2,\ph_2) + S_g(\ph_i ; p_i)  \\
 \leq &~v_i - h(\vh_2,\ph_2) + S_g(p_i ; p_i)  \\
 =    &~h(v_i,p_i) - h(\vh_2,\ph_2) .
\end{align*} 
using the properness of $S_g$ and the definition of $h(v_i,p_i)$.
There are two cases. First, if $h(v_i,p_i) < h(\vh_2,\ph_2)$, then $U(\vh_i,\ph_i) < 0$.
But, if $i$ had reported truthfully, $i$ would have gotten a utility
of zero (having not have been selected as the winner). Second, if
$h(v_i,p_i) \geq h(\vh_2,\ph_2)$, then $U(\vh_i,\ph_i) \leq h(v_i,p_i) - h(\vh_2,\ph_2)$.
But, if $i$ had reported truthfully, $i$ would have gotten
an expected utility of $h(v_i,p_i) - h(\vh_2,\ph_2)$.
This shows incentive compatibility.
\end{proof}

\section{Impossibility Results}
\label{sec:impossibility}

An alternative way to combine consumer welfare with the
advertiser/auctioneer welfare is to ask for an outcome that maximizes
the advertiser/auctioneer welfare subject to the winner's quality
parameter meeting a minimum threshold $\alpha$. It
is not hard to show that achieving such ``discontinuous'' objective
functions is impossible.\footnote{Intuitively, the reason is that it
  is impossible to distinguish between a coin whose probability of
  heads is $\alpha$ and one whose probability is $\alpha - \epsilon$,
  when $\epsilon$ can be arbitrarily small, by the result of a single
  flip.} A more reasonable goal is
to obtain an incentive-compatible mechanism with the following
property: for two given thresholds $\alpha$ and $\beta$ with $\alpha <
\beta$, the mechanism always selects a winner $i^*$ with quality
$p_{i^*}$ at least $\alpha$, and with a value $v_{i^*}$ that is at
least $v^* := \max_{i: p_i \ge \beta}\{v_i\}$ (or an approximation of
$v^*$).

One approach to solving this problem is to use the result of the
previous section (Theorem~\ref{thm:mech}) with an appropriate choice
of the function $g$. Indeed, if we assume the values are from a
bounded range $[0,V_{\max})$ and use the auction mechanism from
Theorem~\ref{thm:mech} with a function $g$ defined as follows,

$$g(p) = \begin{cases}
0 & \mbox{if }p < \alpha\\
\frac{p - \alpha}{\beta - \alpha}.V_{\max} & \mbox{if }p\ge \alpha
\end{cases}$$
then if there is at least one bidder with quality parameter at least
$\beta$, then the mechanism is guaranteed to pick a winner with
quality at least $\alpha$. This is easy to see: the adjusted bid of
the bidder with quality at least $\beta$ is at least $V_{\max}$, while
the adjusted bid of any bidder with quality less than $\alpha$ is less
than $V_{\max}$. In terms of the value, however, this mechanism cannot
provide any multiplicative approximation guarantee, as it can select a
bidder with quality $1$ and value $0$ over a bidder with quality
$\beta$ and any value less than
$\frac{1-\alpha}{\beta-\alpha}V_{\max}$.

Unfortunately, as we show in Theorem \ref{thm:impossibility}:, this is
unavoidable: unless $\beta = 1$ (that is, unless welfare is compared
only against bidders of ``perfect'' quality), there is no deterministic,
truthful mechanism that can guarantee a bounded multiplicative
approximation guarantee in the above setting.

\begin{theorem} 
\label{thm:impossibility}
For a given $0 \leq \alpha < \beta \leq 1$ and $\lambda \geq 1$,
suppose that a deterministic truthful mechanism satisfies that, if there
is some bidder $i$ with $p_i \geq \beta$:
\begin{enumerate}[topsep=2pt]
 \item The winner has $p_{i^*} > \alpha$;
 \item The winner has value $v_{i^*} \geq v^* / \lambda$, where
       $v^* := \max_{i: p_i \ge \beta}\{v_i\}.$
\end{enumerate}
Then $\beta = 1$. This holds even if valuations are upper-bounded by a
constant $V_{max}$.
\end{theorem}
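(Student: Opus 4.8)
The plan is to argue by contradiction: assume $\beta<1$ and that a deterministic truthful mechanism meets both guarantees, then exhibit a profitable misreport that cannot be deterred. The engine is that the mechanism observes only a single Bernoulli draw $\state$ of the winner, so a winning bidder's expected payment is an \emph{affine} function of her true quality $p$; deterring a bidder of true quality just below $\beta$ from claiming quality $\beta$ would require this affine function to be arbitrarily steep, which a fixed mechanism cannot achieve.

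First I would pin down the payment structure implied by truthfulness. Fix a low-value competitor, e.g.\ a bidder of type $(0,1)$, so that by condition~2 any bidder reporting value $\vh>0$ and quality $\ph\ge\beta$ is \emph{forced} to win (she is the unique positive-value bidder with $p\ge\beta$, so $v^*=\vh$ and the competitor's value $0<\vh/\lambda$ fails the bound). Incentive compatibility in the value coordinate then forces the winner's payment to be independent of her reported value on this region, and incentive compatibility in the quality coordinate makes the state-contingent payment a proper binary scoring rule in the reported quality (exactly the situation of Lemma~\ref{lemma:S-proper}, read as a constraint rather than a construction). Writing $\varphi(p)$ for the resulting expected payment of a truthful winner of quality $p$, the scoring-rule property makes $-\varphi$ convex, and letting the reported value tend to $0$ in the individual-rationality constraint gives $\varphi(p)\le 0$ for all $p\in[\beta,1]$.

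Next I would exploit a bidder of value $V$ whose true quality is $\beta-\epsilon$. Against a competitor of type $(v_2,\beta)$ with $\lambda v_2<V$, reporting quality $\beta$ places her in the forced-win region and triggers the value protection $v^*=V$, so she wins; her expected payment is $\varphi(\beta)+\varphi'(\beta)(\beta-\epsilon-\beta)=\varphi(\beta)-\epsilon\,\varphi'(\beta)$. If truthful reporting would instead make her lose, deterrence requires this to be at least $V$, i.e.\ $|\varphi'(\beta)|\ge (V-\varphi(\beta))/\epsilon$. Since the same forced-win and individual-rationality analysis bounds the quality-$\beta$ winner's surplus below by $V-\lambda v_2>0$, the right-hand side diverges as $\epsilon\to 0$, while $\varphi'(\beta)$ is a fixed finite number for a fixed mechanism---a contradiction. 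Hence the mechanism must let the quality-$(\beta-\epsilon)$ bidder \emph{win} for all small $\epsilon$, pushing the win region down toward $\beta$; condition~1 (strict $p_{i^*}>\alpha$) forbids winning at quality $\le\alpha$ whenever a quality-$\beta$ bidder is present, so this win region has a lower boundary in $(\alpha,\beta)$.

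The final step, which I expect to be the main obstacle, is to close the gap between this forced lower boundary and $\alpha$. The difficulty is that a mechanism is free to install a ``reserve quality'' strictly between $\alpha$ and $\beta$ and make individual rationality bind there, which locally defuses the slope-blowup argument; ruling this out requires the value-approximation factor $\lambda$ to genuinely bite, by placing \emph{two} high-value bidders whose qualities straddle $\beta$ so that condition~2 forces the winner to be the protected (quality-$\ge\beta$) bidder, while condition~1 together with the single-flip indistinguishability of $\beta$ from $\beta-\epsilon$ makes the protection re-triggerable by a misreport the mechanism cannot price. I would formalize this as a pair of instances inducing identical reports and identical distributions over $\state$ but requiring different winners, deriving the contradiction from the mechanism's inability to separate them. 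Finally I would record why $\beta=1$ escapes: a genuine quality-$1$ winner has $\state=1$ with probability one, so the mechanism can punish the ``impossible'' event $\state=0$ without limit, deterring every lower-quality masquerade while preserving individual rationality---precisely the degenerate case where the affine payment may be arbitrarily steep.
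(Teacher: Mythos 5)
Your first two steps are essentially sound and in fact mirror the paper's own setup: fixing the other bids, any winning report's state-contingent transfers induce an expected transfer that is \emph{affine} in the bidder's true quality, the upper envelope $h(p)$ of these affine functions over all winning bids is convex, and IR of a forced winner with vanishing value pins her expected payment at (essentially) zero; your slope-blowup at $\beta$ is a correct local consequence of this structure (modulo some sloppiness about which competitor profile defines $\varphi$, since the payment rule changes with the opponents' reports). The genuine gap is exactly where you flag it: step 3 is not a proof, and the strategy you sketch cannot be made to work. A pair of instances with ``identical reports and identical distributions over $\state$'' does not exist for $\epsilon>0$ (the distributions differ), and under truthful play identical reports mean identical types; so that framing collapses back to the deviation argument you already made, which is local. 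Worse, the ``reserve quality'' mechanism you worry about genuinely defeats any attempt to push the win region down to $\alpha$: a convex envelope $h$ rising affinely from $-V_{\max}$ at $\alpha$ to $0$ at $\beta$ satisfies every constraint you impose (it deters all quality-$\leq\alpha$ bidders, lets quality-$\beta$ bidders win nearly for free, and has bounded slope, so your blow-up argument only bites within $O(1/|h'|)$ of $\beta$).

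What actually closes the argument in the paper is a third anchor that your proposal never invokes, located at $p=1$ rather than near $\alpha$. Take the quality-$\geq\beta$ competitor's value $v^*$ to be arbitrarily small; then condition 2 forces \emph{every} bidder with value below $v^*/\lambda$ to lose \emph{regardless of quality}, in particular at quality $1$, and truthfulness (no profitable deviation to a winning bid) then forces $h(1)\leq -v^*/\lambda$. Together with $h(\alpha)\leq -V_{\max}$ (your condition-1 anchor) and $h(\beta)\geq -\lambda v^*$ (your IR anchor for the forced winner), convexity of $h$ across the three points $\alpha<\beta\leq 1$ gives
$-\lambda v^* \;\leq\; h(\beta) \;\leq\; \tfrac{\beta-\alpha}{1-\alpha}\,h(1)+\tfrac{1-\beta}{1-\alpha}\,h(\alpha) \;\leq\; -\tfrac{\beta-\alpha}{1-\alpha}\,\tfrac{v^*}{\lambda}-\tfrac{1-\beta}{1-\alpha}\,V_{\max},$
which fails as $v^*\to 0$ unless $\beta=1$. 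This also explains \emph{why} the steep reserve-quality schedule is infeasible: any convex $h$ climbing from $-V_{\max}$ at $\alpha$ to roughly $0$ at $\beta<1$ must overshoot to $h(1)>0$, i.e., must strictly reward a value-$0$, quality-$1$ bidder for winning, violating the value guarantee. Your closing intuition about why $\beta=1$ escapes (unbounded punishment of the probability-zero state $\state=0$) is correct and consistent with this picture.
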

\begin{proof}
Fix all reports $\vv_{-i}$ and $\pv_{-i}$. Let
$t_{\state}(\vh_i,\ph_i)$ be the net transfer to bidder $i$ in state
$\state$ when $i$ reports $(\vh_i,\ph_i)$ and wins the auction
($t_{\state}(\vh_i,\ph_i)$ will be negative if the mechanism charges
bidder $i$). Then we can denote $i$'s expected utility for winning
with report $(\vh_i,\ph_i)$ given true type $(v_i,p_i)$ by
  \[ U(\vh_i,\ph_i ; v_i,p_i) = v_i + p_i t_1(\vh_i,\ph_i) + (1-p_i) t_0(\vh_i,\ph_i)  ~.\]
Since $i$ will always report so as to maximize this value given that $i$
prefers to win, we can define
  \[  h(p_i) = \max_{(\vh_{i},\ph_{i})\in \mathcal{W}} \{p_it_1(\vh_i,\ph_i) + (1-p_i)t_0(\vh_i,\ph_i)\},  \] where $\mathcal{W}$ is the set of winning bids $(\vh_{i},\ph_{i})$,
and write $i$'s expected utility for winning simply as
$U(v_i,p_i) = \max_{(\vh_i,\ph_i)\in\mathcal{W}}U(\vh_i,\ph_i ; v_i,p_i)
= v_i + h(p_i)$.
We note that $h$ is a convex function of $p$ since, for any pricing scheme
$t_{\state}(\vh_i,\ph_i)$, $h(p)$ is the point-wise maximum over a family
of linear functions.

Fix some choices of $0 \leq \alpha < \beta \leq 1$. To guarantee that $i$
does not win if $p_i \leq \alpha$, we must have that, whenever $p_{i} \leq \alpha$,
every winning bid gives $i$ negative expected utility. Therefore, $i$ will not bid
so as to win in this case. Thus,
\begin{align*}
  U(v_i,p_i)         &< 0  &(\forall v_i, p_{i} \leq \alpha)   \\
  \implies h(\alpha) &< -V_{max}  ~.
\end{align*}

Now, suppose there is a $v_1$ with the property that $i$ is never selected
as winner when $v_i < v_1$. Then we must have
\begin{align*}
  U(v_i,p_i)         &< 0  &(\forall v_i < v_1, p_i)  \\
  \implies h(p_i)      &< -v_1  &(\forall p_i) ~.
\end{align*}

Conversely, suppose that there is a $v_2$ with the property that $i$ is
always selected as the winner when $p_i \geq \beta$ and $v_i > v_2$. Then we must have
\begin{align*}
  U(v_i,p_i)         &> 0  &(\forall v_i > v_2, p_i \geq \beta) \\
  \implies h(\beta)  &> -v_2   ~.
\end{align*}

Since $h$ is convex,
\begin{align*}
 \left(\frac{\beta - \alpha}{1-\alpha}\right)h(1) + \left(\frac{1-\beta}{1-\alpha}\right)h(\alpha) &\geq h\left(\frac{\beta - \alpha}{1-\alpha} + \frac{1-\beta}{1-\alpha}\left(\alpha\right)\right)  \\
   &= h(\beta) ~.
\end{align*}
The above inequalities thus imply that
\begin{align}
 v_2 \geq V_{max} \left(\frac{1-\beta}{1-\alpha}\right) + v_1 \left(\frac{\beta - \alpha}{1-\alpha}\right)  ~.  \label{eqn:v1v2-ineq}
\end{align}
Now suppose that our mechanism guarantees a welfare approximation factor
of $\lambda$. Let $v^*$ be the highest value of any bidder other than $i$
having $p \geq \beta$ (supposing such a bidder exists). Then $i$ loses if
$v_i < v^* / \lambda = v_1$ and wins whenever $p_i \geq \beta$ and $v_i > \lambda v^* = v_2$.
But $v_1$ and $v_2$ satisfy the properties given above, so they satisfy
Inequality \ref{eqn:v1v2-ineq}. Now take $v^*$ arbitrarily small, so
that $v_1,v_2 \ll V_{max}$,
and Inequality \ref{eqn:v1v2-ineq} can only hold if $\beta = 1$.
\end{proof}

The techniques used in the above proof can be used to show that the
convexity assumption in Theorem~\ref{thm:mech} is indeed necessary:

\begin{theorem}
\label{thm:convexitynecessary}
Assume $g: \mathbb{R} \to \mathbb{R}$ is a function for which there
exists a deterministic truthful auction that always picks the bidder
$i^*$ that maximizes $v_{i^*} + g(p_{i^*})$ as the winner. Then $g$ is
a convex function.
\end{theorem}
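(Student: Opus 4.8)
The plan is to reuse the convex ``utility-for-winning'' function $h$ from the proof of Theorem~\ref{thm:impossibility} and to show that, for a welfare-maximizing mechanism, $h$ must coincide with $g$ up to an additive constant; convexity of $g$ then follows immediately from convexity of $h$. First I would fix the reports $\vv_{-i},\pv_{-i}$ of all other bidders and, exactly as before, let $\mathcal{W}$ be the set of reports $(\vh_i,\ph_i)$ with which $i$ wins, and define $h(p) = \max_{(\vh_i,\ph_i)\in\mathcal{W}}\{p\, t_1(\vh_i,\ph_i) + (1-p)\, t_0(\vh_i,\ph_i)\}$, where $t_\state$ is the net transfer to $i$ in state $\state$ upon winning. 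Since each bracketed term is affine in $p$ and $h$ is their pointwise maximum, $h$ is convex in $p$. I would also record that, by incentive compatibility and individual rationality, the utility $i$ obtains by losing is a type-independent constant $\ell^*$: any losing report yields an expected transfer that cannot depend on $i$'s true $(v,p)$ (the purchase probability is governed by the \emph{winner}'s quality), so a losing bidder optimally reports the losing bid of largest expected transfer, and IC forces every truthfully-losing type to already receive exactly that value.

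Next I would exploit that the mechanism selects the welfare maximizer. Writing $T := \max_{j\neq i}\{v_j + g(p_j)\}$, a report $(\vh_i,\ph_i)$ lies in $\mathcal{W}$ exactly when $\vh_i + g(\ph_i) > T$, so a truthful bidder of type $(v,p)$ wins precisely when $v + g(p) > T$. Moreover, when the truthful report wins, IC applied among winning reports shows the truthful report is transfer-optimal, i.e.\ it attains the maximum defining $h$; hence $i$'s utility from winning truthfully is exactly $v + h(p)$, not merely bounded by it.

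Now I would read off two one-sided inequalities at the boundary $v = T - g(p)$. For every type that should win, comparing winning against the option of losing gives $v + h(p) \geq \ell^*$ whenever $v + g(p) > T$; letting $v$ decrease to $T - g(p)$ yields $h(p) \geq g(p) - T + \ell^*$. For every type that should lose, IC forbids a profitable deviation to any winning report, giving $v + h(p) \leq \ell^*$ whenever $v + g(p) < T$; letting $v$ increase to $T - g(p)$ yields $h(p) \leq g(p) - T + \ell^*$. Because $v$ ranges over all of $\mathbb{R}$, both regimes are nonempty for each fixed $p\in[0,1]$, so the two bounds combine to give $h(p) = g(p) - T + \ell^*$ for every $p\in[0,1]$.

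Finally, since $h$ is convex and $g(p) = h(p) + (T - \ell^*)$ differs from $h$ only by an additive constant, $g$ is convex on $[0,1]$, as claimed. The step I expect to be the main obstacle is the bookkeeping around the boundary: making precise that the truthful winning report attains $h(p)$ (so that the utility from winning is genuinely $v+h(p)$), that the losing utility is a single type-independent constant $\ell^*$, and that the strict inequalities $v+g(p) > T$ and $v+g(p) < T$ deliver \emph{matching} one-sided limits that pin $h(p)$ down exactly rather than only up to the tie-breaking behaviour on the line $v+g(p) = T$.
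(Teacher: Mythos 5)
Your proposal is correct and follows essentially the same route as the paper: both arguments show that the ``utility-for-winning'' function $h(p)=\max_{(\vh_i,\ph_i)\in\mathcal{W}}\{p\,t_1+(1-p)\,t_0\}$ is convex as a pointwise maximum of affine functions, and then use the fact that the IC winning threshold in $v$ must coincide with the welfare-maximization threshold $T-g(p)$ to conclude $g=h+\text{const}$. Your treatment is marginally more careful than the paper's (which implicitly normalizes losing utility to zero and equates the two thresholds directly, where you justify the constant $\ell^*$ and pin down the boundary via one-sided limits), but the key idea is identical.
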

\begin{proof}
As in the proof of Theorem~\ref{thm:impossibility}, fix all reports
$\vv_{-i}$ and $\pv_{-i}$, and define $t_{\state}(\vh_i,\ph_i)$ and
$U(\vh_i,\ph_i ; v_i,p_i)$ as before. By the incentive compatibility
and individual rationality of the mechanism, bidder $i$ must win the
auction if

$$\max_{(\vh_i,\ph_i)} U(\vh_i,\ph_i ; v_i,p_i) > 0$$
and lose if $$\max_{(\vh_i,\ph_i)} U(\vh_i,\ph_i ; v_i,p_i) < 0.$$
Equivalently, bidder $i$ must win if 
$$v_i > - \max_{(\vh_i,\ph_i)} \{p_it_1(\vh_i,\ph_i) + (1-p_i)t_0(\vh_i,\ph_i)\}$$ 
and lose if the opposite inequality holds. On the other hand, since
the mechanism always picks the bidder that maximizes $v_i + g(p_i)$,
bidder $i$ must win if 
$$v_i > \max_{j\neq i}\{v_j + g(p_j)\} - g(p_i)$$
and lose if the opposite inequality holds. Thus, we must have:

\begin{align*}
&\max_{j\neq i}\{v_j + g(p_j)\} - g(p_i) =\\
&\qquad - \max_{(\vh_i,\ph_i)} \{p_it_1(\vh_i,\ph_i) + (1-p_i)t_0(\vh_i,\ph_i)\},
\end{align*}
or
$$g(p_i) = \max_{(\vh_i,\ph_i)} \{p_it_1(\vh_i,\ph_i) +
(1-p_i)t_0(\vh_i,\ph_i)\} + \max_{j\neq i}\{v_j + g(p_j)\}.$$ 

The right-hand side of the above equation is the maximum of a number
of terms, each of which is a linear function of $p_i$. Therefore,
$g(p_i)$ is a convex function of $p_i$.
\end{proof}

\section{A General Framework} 
\label{sec:general}

Daily deals websites generally offer many deals simultaneously, and to
many consumers. A more realistic model of this scenario must take into
account complex \emph{valuation functions} as well as general \emph{quality reports}.
Merchants' valuations may depend on which slot (top versus bottom,
large versus small) or even \emph{subset} of slots they win; they may also
change depending on which competitors are placed in the other slots. Meanwhile,
merchants might like to report quality in different units than purchase probability,
such as (for example) total number of coupon sales in a day, coupon sales
relative to those of competitors, or so on.

In this section, we develop a general model that can cover these cases
and considerably more. As in a standard multidimensional
auction, bidders have a valuation for each outcome of the mechanism (for
instance, each assignment of slots to bidders). For quality reports, our key
insight is that they may be modeled by a \emph{belief} or \emph{prediction}
over possible states of the world, where each state has some verifiable quality.
This naturally models many scenarios where the designer
would like to make a social choice (such as allocating goods) based
not only on the valuations of the agents involved, but also on
the likely externality on some non-bidding party; however, this externality
can be best estimated by the bidders. We model this externality by
a function, which we call the \emph{consumer welfare function},
that maps probability distributions to a welfare value.
A natural consumer welfare function is the expected value of
a distribution.

When this consumer welfare satisfies a convexity condition, we construct
truthful mechanisms for welfare maximization in this general setting; we also prove matching negative
results. This allows us to characterize implementable welfare functions in terms
of \emph{component-wise convexity}, which includes the special case of
expected value and can also capture intuitively risk-averse preferences.

We start with a definition of the model in
Section~\ref{subsection:general-model}, and then give a truthful
mechanism as well as a matching necessary condition for
implementability in this model in
Section~\ref{subsection:truthful-general}. In
Section~\ref{subsection:examples-general} we give a number of applications
and extensions of our general framework.

\subsection{Model} 
\label{subsection:general-model}

We now define the general model, using the multi-slot daily deals
problem as a running example to
illustrate the definition.

There are $m$ {\em bidders} (also called {\em merchants}) indexed $1$
through $m$, and a finite set $\Outcomes$
of possible {\em outcomes} of the mechanism. Each bidder has as private
information a valuation function $v_i : \Outcomes \to \mathbb{R}$ that
assigns a value $v_i(o)$ to each outcome $o$. For instance, each outcome
$\outcome$ could correspond to an assignment of
merchants to the available slots, and $v_i(o)$ is $i$'s expected value
for this assignment, taking into account the slot(s) assigned to $i$ as
well as the coupons in the other slots.

For each $\outcome \in \Outcomes$ and each
bidder $i$, there is a finite set of observable disjoint {\em states}
of interest $\ioStates$ representing different events that could occur
when the mechanism's choice is $o$. For example, if merchant $i$ is awarded a slot under
outcome $\outcome$, then $\ioStates$ could be the possible total numbers of
sales of $i$'s coupon when the assignment is $o$, \emph{e.g.} $\ioStates = \{$fewer than $1000$,
$1000$ to $5000$, more than $5000\}$.

Given an outcome $o$ chosen by the mechanism, nature will select at random
one of the states $\omega$ in $\ioStates$ for each bidder $i$.\footnote{These choices do
not have to be independent across bidders; indeed, all bidders could be predicting
the same event, in which case $\ioStates = \ioprmStates$ for all $i,i\prm$ and
nature selects the same state for each $i$.} In the running example,
some number of consumers choose to purchase $i$'s coupon, so perhaps
$\omega = \text{``$1000$ to $5000$''}$.

We let $\Delta_{\ioStates}$ denote the probability simplex over the
set $\ioStates$, i.e., $\Delta_{\ioStates} = \{p\in [0,1]^{\ioStates}:
\sum_{\omega\in\ioStates} p_{\omega} = 1\}$. Each bidder $i$ holds as private
information a set of beliefs (or predictions) $p_i:
\Outcomes \to \Delta_{\ioStates}$. For each outcome $\outcome$,
$p_i(\outcome) \in \Delta_{\ioStates}$ is a probability distribution
over states $\state \in \ioStates$. Thus, under outcome $o$ where $i$ is
assigned a slot, $p_i(o)$ would give the probability that
$i$ sells fewer than $1000$ coupons, that $i$ sells between $1000$ and $5000$ coupons,
and that $i$ sells more than $5000$ coupons. We denote the vector
of predictions $(p_1(\outcome),\hdots,p_m(\outcome))$ at outcome $o$
by $\pv(\outcome)\in \cart_{i=1}^m\Delta_{\ioStates}$.

The goal of the mechanism designer is to pick an outcome that
maximizes a notion of welfare. The combined welfare of the bidders and
the auctioneer can be represented by $\sum_{i=1}^m v_i(o)$. If this
was the goal, then the problem could have been solved by ignoring
the $p_i(o)$'s and using the
well-known Vickrey-Clarke-Groves
mechanism~\cite{Vickrey,Clarke,Groves}. In our setting, however, there is another component in the
welfare function, which for continuity with the daily deals setting
we call the {\em consumer welfare}. This component, which depends
on the probabilities $p_i(o)$, represents the
welfare of a non-bidding party that the auctioneer wants to keep
happy (which could even be the auctioneer herself!). The consumer welfare
when the mechanism chooses outcome $o$ is given by an arbitrary function
$g_o: \cart_{i=1}^m \Delta_{\ioStates} \to \mathbb{R}$ which depends on
the bidders' predictions $\pv(\outcome)$. The goal of the
mechanism designer is then to pick an outcome $o$ that maximizes 
$$\left( \sum_{i=1}^m v_i(o) \right) + g_{\outcome}(\pv(\outcome)).$$

For example, in the multi-slot problem, consumer welfare at the
outcome $o$ could be defined as the sum of the expected number of clicks of
the deals that are allocated a slot in $o$. 

A mechanism in this model elicits bids $(\vh_i, \ph_i)$ from each
bidder $i$ and picks an outcome $o$ based on these bids. Then, for each
$i$, the mechanism observes the state $\omega_i$ picked by nature from
$\ioStates$ and charges $i$ an amount that can depend on
the bids as well as the realized state $\omega_i$. This mechanism is
{\em truthful} (incentive compatible and
individually rational) if, for each bidder $i$, and for any set of
reports of other bidders $(\vh_{-i}, \ph_{-i})$, bidder $i$ can
maximize her utility by bidding her true type $(v_i, p_i)$, and this
utility is non-negative.

\subsection{A Class of Truthful Mechanisms} 
\label{subsection:truthful-general}
In this section, we give a truthful mechanism for the general setting,
assuming that the consumer welfare function $g_o$ satisfies the following
convexity property.
\begin{define}
A function $f: \Delta_{\States}\mapsto\mathbb{R}$ is convex if and
only if for each $x, y\in\Delta_{\States}$ and each $\alpha\in[0,1]$, 
$$f(\alpha x + (1-\alpha)y)\le \alpha f(x) + (1-\alpha)f(y).$$ We call
a function $g_{\outcome}: \cart_{i=1}^m\Delta_{\ioStates}\mapsto
\mathbb{R}$ {\em component-wise convex} if for each $i$ and for each
vector $\pv_{-i}(o)\in \cart_{j: j\neq i}\Delta_{\States_{j,o}}$ of
predictions of bidders other than $i$, $g_o(p_i(o), \pv_{-i}(o))$ is a
convex function of $p_i(o)$.
\end{define}

Component-wise convexity includes the important special case of
expected value, and can also capture an intuitive notion of risk aversion with respect
to each bidder's prediction, as it requires that the value of taking
a draw from some distribution gives lower utility than the expected
value of that draw (see footnotes \ref{ft:risk-example} and \ref{ft:risk-concave}).
It includes all convex functions, but there are also functions such as
$g(p_1,p_2) = \left(p_1 - \frac{1}{2}\right) \cdot \left(p_2 - \frac{1}{2}\right)$
that are component-wise convex but not convex.

We can now state our result.

\begin{theorem}
\label{theorem:truthful-general}
If for any outcome $o$, the consumer welfare function $g_o$ is
component-wise convex, then there is a truthful mechanism that selects
an outcome $o$ that maximizes $\left(\sum_{i=1}^m v_i(o)\right) +
g_{\outcome}(\pv(\outcome)).$
\end{theorem}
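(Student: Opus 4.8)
The plan is to generalize the VCG-plus-scoring-rule construction from the proof of Theorem~\ref{thm:mech} to this multidimensional setting. The central idea is to combine two well-understood tools: a VCG-style payment that makes each bidder's valuation reporting truthful, and a multidimensional proper scoring rule that makes each bidder's prediction reporting truthful, arranged so that each bidder's total expected payment, when she reports her true prediction, cleanly equals her contribution to the objective. First I would state a multidimensional analogue of Lemma~\ref{lemma:S-proper}: for any convex function $f: \Delta_{\States} \to \mathbb{R}$, there is a proper scoring rule $S_f(\ph, \state)$ over the simplex whose expected score under the true distribution equals $f$ itself, i.e.\ $S_f(p ; p) = f(p)$. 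This is the Gneiting--Raftery characterization and follows by taking $S_f(\ph, \state) = f(\ph) + \langle \nabla f(\ph), \mathbf{1}_\state - \ph \rangle$ using a subgradient of $f$, so that $S_f(\ph ; p) = f(\ph) + \langle \nabla f(\ph), p - \ph\rangle \le f(p)$ by the subgradient inequality, with equality at $\ph = p$.

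With this in hand, I would define the mechanism as follows. Given reports $(\vh_i, \ph_i)$, the mechanism selects the outcome $o^*$ maximizing $\left(\sum_i \vh_i(o)\right) + g_o(\ph(o))$. The payment charged to bidder $i$ consists of two parts. The first is a Clarke-style externality term that charges $i$ the difference between the best achievable objective for the others when $i$ is excluded and the achieved value of the others' contributions, which handles the valuation component exactly as in VCG. The second part refunds $i$ a scoring-rule payment tied to her own prediction: using the component-wise convexity assumption, for the chosen outcome $o^*$ the function $\ph_i(o^*) \mapsto g_{o^*}(\ph_i(o^*), \ph_{-i}(o^*))$ is convex in $i$'s own report, so I apply the single-bidder scoring rule $S_{g_{o^*}(\cdot,\, \ph_{-i}(o^*))}$ to $i$'s prediction $\ph_i(o^*)$ and the realized state $\omega_i$, paying $i$ its value. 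The key design choice is that the scoring rule used for $i$ depends on the \emph{others'} reported predictions $\ph_{-i}(o^*)$, which $i$ cannot influence.

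The verification splits into two independent incentive arguments per bidder, which is what makes the construction clean. For the prediction $\ph_i$: conditioned on the selected outcome $o^*$, the only term in $i$'s utility depending on her realized state is the scoring-rule payment, and by properness of $S_{g_{o^*}(\cdot,\,\ph_{-i}(o^*))}$ her expected payment is maximized by reporting her true $p_i(o^*)$; I would need to note that under truthful prediction reporting, the scoring-rule term evaluates in expectation to exactly $g_{o^*}(p_i(o^*), \ph_{-i}(o^*))$, which is precisely $i$'s contribution to the objective. For the valuation $\vh_i$: fixing truthful predictions, the total objective the mechanism maximizes is $\left(\sum_j \vh_j(o)\right) + g_o(\ph(o))$, and since $i$'s net utility equals her true contribution to this objective minus an $i$-independent term (the standard VCG telescoping), she is incentivized to report so that the mechanism maximizes the true total objective, which truthful reporting achieves. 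Individual rationality follows because $i$'s utility equals the marginal welfare she adds, which is nonnegative relative to the excluding-$i$ optimum.

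The main obstacle is getting the two incentive arguments to decouple correctly, since a deviation in $\vh_i$ can change which outcome $o^*$ is selected, and the scoring rule used for $i$'s prediction depends on $o^*$ through $\ph_{-i}(o^*)$. The delicate point is that properness of the scoring rule must hold \emph{pointwise} for every possible selected outcome, so I would argue that regardless of how $i$'s valuation report steers the outcome choice, her prediction report is always best set truthfully because for each fixed candidate outcome the corresponding scoring rule is proper; this lets me reduce to the single-deviation analysis in each coordinate separately. The component-wise (rather than joint) convexity assumption is exactly what guarantees that, for the realized $\ph_{-i}(o^*)$, the relevant one-bidder function is convex so that Lemma~\ref{lemma:S-proper} applies; I would emphasize this is the precise place the hypothesis is used. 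A secondary subtlety is handling states drawn for bidders who are not awarded anything or whose state set is shared, but since the scoring rule is evaluated on $i$'s own realized $\omega_i$ and the convexity is in $i$'s own coordinate, the shared-state footnote does not affect the argument.
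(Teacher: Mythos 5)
Your proposal is correct and follows essentially the same route as the paper's proof: the same VCG-like allocation with payment $W_{-i} - \sum_{i'\neq i}\vh_{i'}(o^*) - S(\ph_i(o^*),\state)$, the same Savage/Gneiting--Raftery lemma applied to the component-wise convex slice $g_{o^*}(\cdot,\ph_{-i}(o^*))$ to build a proper scoring rule whose expected truthful score equals that slice. Your ``decoupled'' verification, once written out, collapses into the paper's single chain of inequalities $U(\vh_i,\ph_i) \le W^{o'} - W_{-i} \le W^{o^*} - W_{-i} = U(v_i,p_i)$, so the difference is purely presentational.
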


As in the simple model, our mechanism uses proper scoring rules.  The
definition of scoring rules can be adapted to the general setting as
follows.

\begin{define}
A \emph{scoring rule} $S : \Delta_{\States} \times \States \to
\mathbb{R}$ is a function that assigns a real number $S(p,\state)$ to
each probability report $p\in \Delta_\States$ and state
$\state\in\States$. The expected value of $S(\ph,\omega)$ when
$\omega$ is drawn according to the distribution $p\in\Delta_\States$
is denoted by $S(\ph ; p)$.  A scoring rule $S$ is \emph{proper} if,
for every $p$, $S(\ph ; p)$ is maximized at $\ph = p$.
\end{define}

We also need a generalization of Lemma \ref{lemma:S-proper}. This fact
about scoring rules is proven in more generality
in~\cite{gneiting2007} and is originally due to
Savage~\cite{savage1971}.

\begin{lemma}[\cite{gneiting2007,savage1971}]
\label{lemma:S-proper-general}
For every convex function $g : \Delta_{\States} \to \mathbb{R}$ there
is a proper scoring rule $S_g$ such that for every $p$, $S_g(p;p) =
g(p)$.
\end{lemma}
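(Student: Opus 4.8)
The plan is to generalize the construction in Lemma~\ref{lemma:S-proper} from the binary case to an arbitrary finite state space $\States$. In the binary case, the key objects were the subgradient $g\prm(p)$ of the convex function $g$ at the report $p$, and the scoring rule values $S_g(p,0)$ and $S_g(p,1)$ chosen so that the expected score $S_g(\ph;p)$ equals $g(\ph) + (p-\ph)g\prm(\ph)$, which by the subgradient inequality is bounded above by $g(p)$ with equality at $\ph=p$. The general argument should run along exactly the same lines, with the scalar subgradient replaced by a subgradient \emph{vector} and the affine correction term replaced by an inner product.

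Concretely, first I would invoke convexity of $g$ on the simplex $\Delta_\States$ to obtain, for each report $\ph \in \Delta_\States$, a subgradient $\nabla g(\ph) \in \mathbb{R}^{\States}$, i.e.\ a vector such that for all $q \in \Delta_\States$,
\begin{align*}
  g(q) \geq g(\ph) + \langle \nabla g(\ph),\, q - \ph\rangle.
\end{align*}
Existence of such a subgradient is guaranteed by convexity of $g$ (on the relative interior of the simplex this is standard; at boundary points one works with the affine hull of $\Delta_\States$, exactly as a one-sided derivative sufficed in the binary case). Next I would define the scoring rule by
\begin{align*}
  S_g(\ph, \omega) = g(\ph) + \langle \nabla g(\ph),\, e_\omega - \ph\rangle,
\end{align*}
where $e_\omega$ is the indicator (vertex of the simplex) for state $\omega$. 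This mirrors the binary formulas $S_g(p,1) = g(p) + (1-p)g\prm(p)$ and $S_g(p,0) = g(p) - p\,g\prm(p)$, which are precisely the two cases $\omega=1$ and $\omega=0$ of this single expression.

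Then I would verify the two required properties by direct computation. For the expected score under the true distribution $p$, linearity of the inner product gives
\begin{align*}
  S_g(\ph ; p) = g(\ph) + \langle \nabla g(\ph),\, p - \ph\rangle,
\end{align*}
since $\expectover{\omega\sim p}{e_\omega} = p$. Setting $\ph = p$ immediately yields $S_g(p;p) = g(p)$, the normalization claimed in the lemma. For properness, the subgradient inequality applied with $q = p$ shows that $g(\ph) + \langle \nabla g(\ph),\, p-\ph\rangle \leq g(p)$ for every $\ph$, with equality at $\ph = p$; hence $S_g(\ph;p)$ is maximized at $\ph = p$, as required.

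Since the lemma is attributed to~\cite{gneiting2007,savage1971} and asserts only properness (not strict properness), the argument above is essentially complete, and no case is genuinely hard. The only point demanding care is the existence of a subgradient of $g$ at reports lying on the \emph{boundary} of $\Delta_\States$, where $g$ need not be differentiable and a naive gradient may fail to exist; I would handle this by treating $g$ as a convex function on the affine hull of the simplex and appealing to the standard fact that a finite convex function on a convex set admits a subgradient at every point of that set, which is exactly the step that the footnote in Lemma~\ref{lemma:S-proper} flagged in the one-dimensional case.
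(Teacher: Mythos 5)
Your construction is the standard Savage subgradient construction, and it is exactly what the paper has in mind: the paper gives no in-text proof of this lemma (it defers to \cite{gneiting2007,savage1971}), and your argument is the verbatim generalization of the paper's own proof of Lemma~\ref{lemma:S-proper}, with the scalar subgradient replaced by a subgradient vector and $(p-\ph)g\prm(\ph)$ replaced by $\langle \nabla g(\ph), p - \ph\rangle$. On the relative interior of $\Delta_{\States}$ the argument is complete and correct.

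However, the one step you yourself flagged as delicate is resolved incorrectly. The ``standard fact'' you invoke --- that a finite convex function on a convex set admits a subgradient at \emph{every} point of that set --- is false at relative-boundary points, and passing to the affine hull does not repair it, because the obstruction is an infinite one-sided directional derivative, not the flatness of the simplex. Concretely, take $|\States|=2$ and $g(p) = -\sqrt{p_1 p_2}$, which is convex on $\Delta_{\States}$. At the vertex $\ph = (1,0)$, any subgradient $s$ (in the affine hull) would need to satisfy $-\sqrt{q_1 q_2} \geq s\, q_2$ for all $q \in \Delta_{\States}$, i.e.\ $s \leq -\sqrt{q_1/q_2} \to -\infty$ as $q_2 \to 0$, so no finite subgradient exists. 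In fact, for this $g$ no \emph{real-valued} proper scoring rule with $S_g(p;p)=g(p)$ exists at all: properness plus the normalization forces $S_g((1,0),\omega_1) = 0$ and then $q_2\, S_g((1,0),\omega_2) \leq -\sqrt{q_1 q_2}$ for all $q$, which again forces $S_g((1,0),\omega_2) = -\infty$. So the lemma as literally stated already presupposes either extended-real-valued scores (this is how \cite{gneiting2007} sets things up, with scoring rules valued in $[-\infty,\infty)$) or a welfare function admitting subgradients up to the boundary --- for instance one extending convexly to a neighborhood of the simplex, which is precisely the role of the hypothesis $g:\mathbb{R}\to\mathbb{R}$ in Theorem~\ref{thm:mech}. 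Your proof becomes fully correct under either convention; as written, this boundary patch is the only genuine flaw, and it is a defect shared by the footnote in the paper's own Lemma~\ref{lemma:S-proper}.
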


\begin{proof}[Proof of Theorem \ref{theorem:truthful-general}]
Using Lemma \ref{lemma:S-proper-general} and the assumption that $g_o$
is component-wise convex, for any outcome $o$, bidder $i$, and set of
reports of other bidders $\pv_{-i}(o)$, we can construct a proper
scoring rule $\Sio$ with $\Sio(p_i(\outcome) ; p_i(\outcome)) =
g_o(p_i(o), \pv_{-i}(o))$. The function $\Sio$ scores prediction
$\ph_i(\outcome)$ on states $\state \in \ioStates$.

Next, we use a Vickrey-Clark-Groves-like mechanism and show that
truthfulness is a dominant strategy.  Let
$W^o = \sum_{i=1}^m v_i(o) + g_{\outcome}(\pv(\outcome)),$
where $(v_i,p_i)$ is the bid of bidder $i$.  Our mechanism selects
outcome $o^*$ that maximizes $W^{o^*}$. Let $W_{-i}$ be the value of
the selection made by our mechanism on the set of bids excluding $i$.
Each bidder $i$, when state $\state \in \States_{i,o^*}$ occurs, pays
$$W_{-i} - \mysum{i\prm \neq i}{} v_{i\prm}(o^*) - S_{o^*, i, \pv_{-i}(o^*)}(p_i(o^*),\state).$$
Therefore, under outcome $o^*$, bidder $i$'s expected utility for
reporting truthfully is
\begin{align}
\label{eqn:1}
 U(v_i,p_i) =   &~ v_i(o^*) - W_{-i} + \mysum{i\prm \neq i}{} v_{i\prm}(o^*)
+ S_{o^*, i, \pv_{-i}(o^*)}(p_i(o^*); p_i(o^*)) \nonumber\\
= & \mysum{i\prm=1}{m} v_{i\prm}(o^*)- W_{-i} + g_{o^*}(\pv(o^*)) \nonumber\\
=   &~ W^{o^*} - W_{-i}
\end{align}

The above value is clearly non-negative. Now, consider a scenario
where $i$ changes her bid to $(\vh_i,\ph_i)$, when her type is still
given by $(v_i,p_i)$. Let $o'$ denote the outcome selected in that
scenario. The utility of bidder $i$ in this scenario can be written as:
\begin{align}
\label{eqn:2}
 U(\vh_i,\ph_i) =   &~ v_i(o') - W_{-i} + \mysum{i\prm \neq i}{} v_{i\prm}(o')
+ S_{o', i, \pv_{-i}(o')}(\ph_i(o'); p_i(o')) \nonumber\\
\le & \mysum{i\prm=1}{m} v_{i\prm}(o')- W_{-i} + g_{o'}(\pv(o')) \nonumber\\
=   &~ W^{o'} - W_{-i},
\end{align}
where the first inequality follows from the fact that $S_{o', i,
  \pv_{-i}(o')}$ is a proper scoring rule with $S_{o', i,
  \pv_{-i}(o')}(p_i(o'); p_i(o')) = g_{o'}(\pv(o'))$.

Finally, note that by the definition of $o^*$, we have $W^{o^*}\ge
W^{o'}$. This inequality, together with \eqref{eqn:1} and
\eqref{eqn:2} implies that $U(\vh_i,\ph_i)\le U(v_i,p_i)$. Therefore,
$i$ cannot gain by misreporting her type.
\end{proof}

\subsection{Characterization of Implementable Consumer Welfare Functions}
\label{subsection:impossibility-general}

In this section, we show that the {\em component-wise convexity}
assumption that we imposed on the consumer welfare function in the
last section to derive a truthful mechanism is indeed necessary. In
other words, in our general setting, the consumer welfare functions
that are implementable using dominant-strategy truthful mechanisms are
precisely those that are component-wise convex.

\begin{theorem}
\label{thm:necessary}
Suppose $g$ is a consumer welfare function and there exists a
deterministic truthful mechanism that always selects the outcome $o$
that maximizes $\sum_{i=1}^{m} v_i(o) + g_o(\pv(o))$.  Then $g_o$ is
component-wise convex for every $o$.
\end{theorem}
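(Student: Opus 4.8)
The plan is to replay the threshold-matching argument from the proof of Theorem~\ref{thm:convexitynecessary}, but to first collapse the multi-outcome structure down to a single outcome and a single bidder's prediction there, so that everything else becomes an additive constant. Fix a bidder $i$ and an outcome $\outcome$, and fix the reports of all the other bidders (this fixes $\pv_{-i}(\outcome)$). The key extra idea beyond the one-slot proof is to \emph{also} fix bidder $i$'s own valuation and prediction at every outcome other than $\outcome$, and to let only $a := v_i(\outcome)\in\mathbb{R}$ and $q := p_i(\outcome)\in\Delta_{\ioStates}$ vary; this is legitimate since the type space is a product over outcomes. The goal is then exactly to show that $\phi(q) := g_\outcome(q,\pv_{-i}(\outcome))$ is convex in $q\in\Delta_{\ioStates}$.

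First I would set up the transfers. For reports $(\vh_i,\ph_i)$ that cause the deterministic mechanism to select $\outcome$, let $T_\state(\vh_i,\ph_i)$ be the net transfer to $i$ in state $\state\in\ioStates$, and define
\[ h_\outcome(q) \;=\; \sup_{(\vh_i,\ph_i)\,\to\,\outcome}\ \sum_{\state\in\ioStates} q_\state\, T_\state(\vh_i,\ph_i), \]
the best transfer-expectation $i$ can extract given that $\outcome$ is chosen. Since each summand is affine in $q$ and the supremum is over \emph{reports} --- quantities that never involve the true pair $(a,q)$ --- the function $h_\outcome$ is a supremum of affine functions of $q$ and hence convex. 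This is the direct generalization of the observation in the earlier proofs that the utility-from-winning is a pointwise maximum of linear functions.

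Next I would argue that the remaining outcomes are inert. On one hand, for every $\outcome\prm\neq\outcome$ the reported welfare $\sum_j \vh_j(\outcome\prm) + g_{\outcome\prm}(\pv(\outcome\prm))$ is independent of $(a,q)$, because $g_{\outcome\prm}$ depends only on predictions at $\outcome\prm$ and $i$'s type at $\outcome\prm$ is held fixed; hence the welfare-maximization rule becomes ``$\outcome$ is selected under truthful reporting iff $a + \phi(q)\ge C\prm$'' for some constant $C\prm$. On the other hand, bidder $i$'s best attainable utility from having \emph{any} outcome $\outcome\prm\neq\outcome$ selected --- obtained, for each such $\outcome\prm$, by optimizing the transfer at $\outcome\prm$ against the fixed belief $p_i(\outcome\prm)$ and then taking the best $\outcome\prm$ --- is likewise a constant $B$ in $(a,q)$. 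Using incentive compatibility to bound all deviations, together with the fact that $i$'s realized utility equals value-plus-transfer at the selected outcome, I would then establish the clean identity $U(v_i,p_i) = \max\{\,a + h_\outcome(q),\ B\,\}$.

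Finally I would match the two descriptions of the event ``$\outcome$ is selected.'' The welfare rule gives the threshold $a\ge C\prm-\phi(q)$, whereas the identity $U=\max\{a+h_\outcome(q),B\}$ forces the mechanism to choose $\outcome$ precisely when $a+h_\outcome(q)\ge B$, i.e.\ $a\ge B-h_\outcome(q)$. Since two thresholds that are affine in $a$ with the same slope and that cut out the same selection for all real $a$ must coincide, I get $\phi(q)=h_\outcome(q)+(C\prm-B)$, which is convex; as $i$, $\outcome$, and $\pv_{-i}(\outcome)$ were arbitrary, this is exactly component-wise convexity of every $g_\outcome$. I expect the main obstacle to be this matching step and the identity preceding it: one must carefully separate report-quantities from true-type-quantities so that the suprema defining $h_\outcome$ and $B$ really are constants in $(a,q)$, invoke determinism so that the selected outcome and the transfers are genuine functions of the reports, and handle ties in the welfare comparison so that the two thresholds agree exactly rather than only away from their boundary.
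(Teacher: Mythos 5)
Your proposal is correct and takes essentially the same approach as the paper's own proof: your $h_o(q)$ is precisely the paper's $f_o(p_i(o)) = \max_{t \in A_o} t \cdot p_i(o)$ (convex as a supremum of affine functions of the prediction), and your threshold-matching step is exactly the paper's equating of the incentive-compatibility threshold for $v_i(o)$ with the welfare-maximization threshold, yielding $g_o(\pv(o)) = f_o(p_i(o)) + \mathrm{const}$ and hence component-wise convexity. The only differences are presentational: you name the inert quantities $B$ and $C^{\prime}$ explicitly and flag the tie-handling issue, which the paper treats implicitly by requiring the strict-inequality win/lose regions to coincide.
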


\begin{proof}
Fix a bidder $i$ and bids $(v_{-i}, p_{-i})$ of all the other
bidders. We prove that for every outcome $o$, the function
$g_o(\pv(o))$ as a function of $p_i(o)$ is convex. This shows that
$g_o$ is component-wise convex for every $o$.

For any bid $(\vh_i,\ph_i)$ for bidder $i$, the mechanism selects an
outcome $o$, and charges $i$ an amount depending on the realized state
$\state\in\ioStates$. This payment can be represented by a vector in
$\mathbb{R}^{\ioStates}$ (a negative value in this vector indicates a
value that the bidder pays the auctioneer, and a positive value
indicates a reverse transfer). Let $A_{o,i,v_{-i},p_{-i}}\subseteq
\mathbb{R}^{\ioStates}$ denote the collection of payment vectors
corresponding to all bids $(\vh_i,\ph_i)$ for bidder $i$ that (along
with the bids $(v_{-i}, p_{-i})$ for others) result in the mechanism
picking outcome $o$. Since we have fixed $i$ and $v_{-i},p_{-i}$, we
simply denote this collection by $A_o$.

The utility of $i$ when she submits a bid that results in outcome $o$
and payment $t\in A_o$ can be written as $v_i(o) + t.p_i(o)$ (the
latter term is the inner product of $t\in \mathbb{R}^{\ioStates}$ and
$p_i(o)\in\Delta_{\ioStates}$). By truthfulness of the mechanism,
$i$'s utility from truthful bidding must be

$$\max_{o\in\Outcomes}\max_{t\in A_o}\{v_i(o) + t.p_i(o)\},$$
and the outcome selected by the mechanism must be the $o$ that maximizes
the above expression. Denoting
\begin{equation}
\label{eqn:fo}
f_o(p_i(o)) = \max_{t\in A_o}\{t.p_i(o)\},
\end{equation}
this means that the mechanism selects the outcome $o$ if
$$v_i(o) + f_o(p_i(o)) > \max_{o'\neq o}\{v_i(o') +
f_{o'}(p_i(o'))\}$$ and does not select this outcome if the reverse
inequality holds. This means that holding everything other than
$v_i(o)$ constant, the threshold for $v_i(o)$ after which the
mechanism selects the outcome $o$ is precisely
\begin{equation}
\label{eqn:thresh1}
\max_{o'\neq o}\{v_i(o') + f_{o'}(p_i(o'))\} - f_o(p_i(o)).
\end{equation}

On the other hand, the mechanism always picks an outcome $o$ that
maximizes $\sum_{j=1}^{m} v_j(o) + g_o(\pv(o))$. Therefore, holding
everything except $v_i(o)$ constant, the threshold for $v_i(o)$ after
which the outcome $o$ is selected is precisely
\begin{equation}
\label{eqn:thresh2}
\max_{o'\neq o}\left\{\mysum{j=1}{m} v_j(o') + g_{o'}(\pv(o'))\right\} -
\mysum{j\neq i}{} v_j(o) - g_o(\pv(o)).
\end{equation}

Therefore, the thresholds~\eqref{eqn:thresh1} and~\eqref{eqn:thresh2}
must be equal. Writing this equality, and moving $g_o(\pv(o))$ to the
left-hand side of the equality and everything else to the right-hand
side, we obtain:

\begin{align*}
g_o(\pv(o)) = & - \max_{o'\neq o}\{v_i(o') + f_{o'}(p_i(o'))\} + f_o(p_i(o)) \\&+\,
\max_{o'\neq o}\left\{\mysum{j=1}{m} v_j(o') + g_{o'}(\pv(o'))\right\}
- \mysum{j\neq i}{} v_j(o).
\end{align*}

Now, observe that the only term on the right-hand side of the above
equation that depends on $p_i(o)$ is $f_o(p_i(o))$. Furthermore,
$f_o(p_i(o))$ (as defined in Equation~\eqref{eqn:fo}) is the maximum
over a family of linear functions of $p_i(o)$, and therefore is a
convex function of $p_i(o)$. This means that fixing any set of values
for $\pv_{-i}(o)$, $g_o(\pv(o))$ is a convex function of
$p_i(o)$. Therefore $g_o$ is component-wise convex.
\end{proof}

\subsection{Applications}
\label{subsection:examples-general}

In this section, we present a few sample applications and extensions
of our general framework. This demonstrates that the results of
Section~\ref{subsection:truthful-general} can be used to
characterize achievable objective functions and design truthful
mechanisms in a very diverse range of settings.

\medskip
\noindent
{\bf Daily Deals with Both Merchant and Platform Information.} In some cases,
it might be reasonable in a daily deals setting to suppose that the platform, as well as the merchant,
has some relevant private information about deal quality. For example, perhaps
the merchant has specific information about his particular deal, while the
auctioneer has specific information about typical consumers under particular
circumstances (days of the week, localities, and so on).
Many such extensions are quite straightforward; intuitively, this is because
we solve the difficult problem: incentivizing merchants to truthfully reveal quality information.

To illustrate, consider a simple model where merchant $i$ gets utility $a_i$ from
displaying a deal to a consumer and an additional $c_i$ if the user purchases it.
For every assignment of slots $o$ containing the merchant's deal, its quality (probability of purchase)
is a function $f_{o,i}$ of two pieces of private information: $x_i$, held by the merchant,
and $y_i$, held by the platform. Each
merchant is asked to submit $(a_i, c_i, x_i)$. The platform computes, for each slot assignment $o$,
$p_i(o) = f_{o,i}(x_i,y_i)$, then sets $v_i(o) = a_i + p_i(o) c_i$ for all $o$
that include $i$'s coupon ($v_i(o) = 0$ otherwise). Then,
the platform runs the auction defined in Theorem \ref{theorem:truthful-general}, setting $i$'s bid equal to $(v_i,p_i)$.
By Theorem \ref{theorem:truthful-general}, bidder $i$ maximizes expected utility when $v_i$ is her
true valuation for winning and $p_i$ is her true deal quality; therefore,
she can maximize expected utility by truthfully submitting $(a_i, c_i, x_i)$,
as this allows the mechanism to correctly compute $v_i$ and $p_i$.

\medskip
\noindent
{\bf Reliable Network Design. } Consider a graph $G$, where each edge
is owned by a different agent.  The auctioneer wants to buy a path from
a source node $s$ to a destination node $t$.
Each edge has a cost for being used in the path, and also a
probability of failure. Both of these parameters are
private values of the edge. The goal of the mechanism designer is to
buy a path from $s$ to $t$ that minimizes the total cost of the edges
plus the cost of failure, which is a fixed constant times the probability
that at least one of the edges on the path fails.

It is easy to see that the above problem fits in our general
framework: each bidder's value is the negative of the cost of the
edge; each ``outcome'' is a path from $s$ to $t$; for each edge $i$ on
a path, the corresponding ``states'' are fail and succeed; the
consumer welfare function $g_o$ for an outcome $o$ is the negative of
the failure cost of that path. For each edge, fixing all other reports,
$g_o$ is a linear function of failure probability. Therefore,
$g_o$ is component-wise convex, and
Theorem~\ref{theorem:truthful-general} gives a
truthful mechanism for this problem.

We can also model a scenario where each edge has a probabilistic delay
instead of a failure probability. When edge $i$ is included in the path,
the possible states $\ioStates$ correspond to the possible delays experienced
on that edge. A natural objective function is to minimize the total
cost of the path from $s$ to $t$ plus its expected delay, which is a
linear function of probability distributions.
We can also implement costs that are concave functions of the
delay on each edge (as welfare, the negative of cost, is then convex).
These model risk aversion, as, intuitively, the cost of a delay drawn
from a distribution is higher than the cost of the expected delay of
that distribution. (Note that our results imply that a \emph{concave}
objective function is \emph{not} implementable!)

The exact same argument shows that other network design problems fit in our
framework. For example, the goal can be to pick a $k$-flow from $s$ to
$t$, or a spanning tree in the graph. The ``failure'' function can
also be more complicated, although we need to make sure the convexity
condition is satisfied. 

\medskip
\noindent
{\bf Principal-Agent Models with Probabilistic Signals. } Another
application of our mechanism is in a \emph{principal-agent} setting,
where a principal would like to incentivize agents to exert an optimal
level of effort, but can only observe a probabilistic signal of
this effort. Suppose the principal wishes to hire a set of agents
to complete a project; the principal only observes whether each agent
succeeds or fails at his task, but the probability of each's success is
influenced by the amount of effort he puts in. More precisely, let
$c_i(e)$ denote the cost of exerting effort $e$
for agent $i$ and $p_i(e)$ denote the probability of the agent's success
if this agent is hired and exerts effort $e$. The welfare generated by
the project is modeled by a component-wise convex function of the
agents' probabilities of success (for instance,
a constant times the probability that all agents succeed).

At the first glance, it might seem that this problem does not fit
within our framework, since each agent can affect its success
probability by exerting more or less effort. However, suppose we
define an outcome of the mechanism as selecting both a set of agents
and an assignment of effort levels to these agents.
Each agent submits as his type the cost $c_i(e)$ and probability of
success $p_i(e)$ for each possible effort level $e$.
Theorem~\ref{theorem:truthful-general} then gives a welfare-maximizing
mechanism that truthfully elicits the $c_i(e)$ and $p_i(e)$ values from each agent and
selects the agents to hire, the effort levels they should exert, and
the payment each receives conditional on whether his component of
the project succeeds.
Agents maximize expected utility by declaring
their true types and exerting the amount of effort
they are asked to.\footnote{The proof is the same as that of truthfulness:
If the agent deviates and exerts some other effort level, his
expected utility will be bounded by if he had reported the truth and the
mechanism had assigned him that effort level; but by design, this is
less than his utility under the choice actually made by the mechanism.}

\section{Conclusion}

Markets for daily deals present a challenging new mechanism-design
setting, in which a mechanism designer (the platform) wishes
to pick an outcome (merchant and coupon to display) that not only
gives good bidder/auctioneer welfare, but also good welfare for
a third party (the consumer); however, this likely consumer welfare
is private information of the bidders.

Despite the asymmetry of information, we show that, when
the consumer welfare function is a convex function of bidders'
quality, we can design truthful
mechanisms for social welfare maximization in this setting.
We give a matching negative result showing that no truthful,
deterministic mechanism exists when consumer welfare is not convex.
Another natural objective, approximating welfare subject to meeting
a quality threshold, also cannot be achieved in this setting.

Extending the daily deals setting to a more general domain yields a
rich setting with many potential applications. We model this setting
as an extension to traditional mechanism design: Now, agents have both
preferences over outcomes and probabilistic beliefs conditional on
those outcomes. The goal is to maximize social welfare including
the welfare of a non-bidding party, modeled by a consumer welfare
function taking probability distributions over states of the world to welfare.

A truthful
mechanism must incentivize bidders to reveal their true preferences
\emph{and} beliefs, even when these revealed beliefs influence the designer
to pick a less favorable outcome for the bidders.
We demonstrate that this is possible if and only if the consumer welfare function
is component-wise convex, and when it is, we explicitly design mechanisms
to achieve the welfare objective. Component-wise convexity includes
expected-welfare maximization and intuitively can capture risk averse preferences.
Finally, we demonstrate the generality of our results with a number of example
extensions and applications.

\bibliographystyle{plain}
\bibliography{coupons}

\end{document}